\documentclass[11pt,a4paper]{article}

\usepackage[utf8]{inputenc}
\usepackage[T1]{fontenc}
\usepackage{charter}
\usepackage[scaled=0.86]{helvet}
\usepackage{microtype}

\usepackage[top=2.5cm, bottom=2.8cm,
            left=2.8cm,  right=2.8cm,
            headheight=14pt]{geometry}

\usepackage{amsmath, amssymb, amsthm, mathtools, bm}
\usepackage{siunitx}
\usepackage{xcolor}
\definecolor{NavyBlue}{HTML}{1B2A4A}
\definecolor{DeepBlue}{HTML}{2E5FA3}
\definecolor{Teal}{HTML}{0F6E56}
\definecolor{LightTeal}{HTML}{C8EEE0}
\definecolor{Amber}{HTML}{854F0B}
\definecolor{LightAmber}{HTML}{FFF3CC}
\definecolor{Coral}{HTML}{993C1D}
\definecolor{LightCoral}{HTML}{FAECE7}
\definecolor{Purple}{HTML}{3C3489}
\definecolor{LightPurple}{HTML}{EEEDFE}
\definecolor{LightBlue}{HTML}{D6E4F7}
\definecolor{LightGray}{HTML}{F5F6FA}
\definecolor{MidGray}{HTML}{888888}
\definecolor{DarkGray}{HTML}{333333}
\definecolor{AlertRed}{HTML}{A32D2D}

\usepackage{graphicx}
\usepackage{tikz}
\usepackage{pgfplots}
\pgfplotsset{compat=1.18}
\usetikzlibrary{
  arrows.meta, positioning, shapes.geometric, shapes.multipart,
  calc, backgrounds, fit, patterns, decorations.pathreplacing,
  matrix
}

\usepackage{booktabs, array, multirow, tabularx, colortbl, makecell}

\usepackage{caption, subcaption}
\usepackage{enumitem}
\setlist[itemize]{leftmargin=1.4em, itemsep=2pt, topsep=3pt}
\setlist[enumerate]{leftmargin=1.6em, itemsep=2pt, topsep=3pt}

\usepackage[ruled,vlined,linesnumbered]{algorithm2e}
\SetAlCapFnt{\small\bfseries}

\usepackage[most]{tcolorbox}
\tcbuselibrary{breakable, skins}

\tcbset{
  theorembox/.style={
    enhanced, breakable,
    colback=LightBlue!60, colframe=DeepBlue,
    boxrule=0.7pt, arc=4pt, left=8pt, right=8pt,
    top=6pt, bottom=6pt,
    fontupper=\small
  },
  remarkbox/.style={
    enhanced, breakable,
    colback=LightGray, colframe=MidGray,
    boxrule=0.5pt, arc=3pt, left=7pt, right=7pt,
    top=5pt, bottom=5pt,
    fontupper=\small
  },
  iprbox/.style={
    enhanced, breakable,
    colback=LightCoral!40, colframe=AlertRed,
    boxrule=0.8pt, arc=4pt, left=8pt, right=8pt,
    top=6pt, bottom=6pt,
    fontupper=\small\color{AlertRed}
  },
  contributionbox/.style={
    enhanced, breakable,
    colback=LightTeal!50, colframe=Teal,
    boxrule=0.7pt, arc=4pt, left=8pt, right=8pt,
    top=6pt, bottom=6pt,
    fontupper=\small
  }
}

\usepackage[colorlinks=true, linkcolor=DeepBlue,
            citecolor=Teal, urlcolor=DeepBlue,
            pdftitle={PARD-SSM: Probabilistic Cyber-Attack Regime Detection},
            pdfauthor={Prakul Sunil Hiremath}]{hyperref}
\usepackage{cleveref}

\usepackage{titlesec}
\titleformat{\section}
  {\large\bfseries\color{NavyBlue}}{\thesection.}{0.5em}{}
  [\vspace{-2pt}\textcolor{DeepBlue!40}{\hrule height 0.4pt}\vspace{2pt}]
\titleformat{\subsection}
  {\normalsize\bfseries\color{NavyBlue}}{\thesubsection.}{0.5em}{}
\titleformat{\subsubsection}
  {\normalsize\bfseries\color{DarkGray}}{\thesubsubsection.}{0.5em}{}

\newtheorem{theorem}{Theorem}[section]
\newtheorem{proposition}[theorem]{Proposition}

\theoremstyle{definition}

\newtheorem{remark}[theorem]{Remark}

\newcommand{\xv}{\bm{x}}
\newcommand{\yv}{\bm{y}}
\newcommand{\uv}{\bm{u}}
\newcommand{\wv}{\bm{w}}
\newcommand{\vv}{\bm{v}}
\newcommand{\Amat}{\bm{A}}
\newcommand{\Cmat}{\bm{C}}
\newcommand{\Qmat}{\bm{Q}}
\newcommand{\Rmat}{\bm{R}}
\newcommand{\Pmat}{\bm{P}}
\newcommand{\Pim}{\bm{\Pi}}
\newcommand{\Kmat}{\bm{K}}
\newcommand{\Smat}{\bm{S}}
\newcommand{\Imat}{\bm{I}}
\newcommand{\ev}{\bm{e}}
\newcommand{\regime}{s}
\newcommand{\ELBO}{\mathcal{L}}
\newcommand{\KLd}{\mathrm{KL}}
\newcommand{\Exp}{\mathbb{E}}
\newcommand{\N}{\mathcal{N}}
\newcommand{\R}{\mathbb{R}}
\newcommand{\given}{\,|\,}

\usepackage{natbib}
\setcitestyle{authoryear,open={(},close={)}}

\usepackage{authblk}
\usepackage{hyperref}
\usepackage{orcidlink}

\begin{document}
\thispagestyle{empty}

\title{PARD-SSM: Probabilistic Cyber-Attack Regime Detection via Variational Switching State-Space Models}

\author[1,2]{%
  \textbf{Prakul Sunil Hiremath} \orcidlink{0009-0007-9744-3519} \\
  {\small \href{mailto:prakulhiremath@vtu.ac.in}{\texttt{prakulhiremath@vtu.ac.in}}} \\
  {\small \href{https://github.com/prakulhiremath}{\texttt{github.com/prakulhiremath}}}
}

\author[1,2]{%
  \textbf{Sahil Bhekane} \\
  {\small \href{mailto:sahilbhekane276@gmail.com}{\texttt{sahilbhekane276@gmail.com}}} \\
  {\small \href{https://github.com/Sahilbhekane}{\texttt{github.com/sahilbhekane}}}
}

\author[1,2]{%
  \textbf{PeerAhammad M Bagawan} \\
  {\small \href{mailto:peerahamadmb@gmail.com}{\texttt{peerahamadmb@gmail.com}}} \\
  {\small \href{https://github.com/PeerAhammad}{\texttt{github.com/peerahammad}}}
}

\affil[1]{Department of Computer Science and Engineering, Visvesvaraya Technological University (VTU), Belagavi, Karnataka, India}
\affil[2]{Aliens on Earth (AoE) Autonomous Research Group, Belagavi, India}

\affil[ ]{\href{https://aliensonearth.in}{\texttt{aliensonearth.in}}}

\date{}
\maketitle

\begin{abstract}
\noindent
Modern adversarial campaigns do not manifest as isolated, detectable
events. They unfold as temporally ordered sequences of behavioural
phases---Reconnaissance, Lateral Movement, Intrusion, and
Exfiltration---each of which may, in isolation, appear statistically
indistinguishable from legitimate network traffic.
Existing Intrusion Detection Systems (IDS) suffer a fundamental
architectural mismatch with this reality: signature-based systems fail
on zero-day variants, deep-learning anomaly detectors produce
uninterpretable scalar anomaly scores with no stage attribution, and
standard Kalman Filters cannot model the non-stationary, multi-modal
dynamics of traffic transitioning across kill-chain phases.

We present \textbf{PARD-SSM} (Probabilistic Attack Regime Detection
via Switching State-Space Models), a generative probabilistic framework
that models high-dimensional network telemetry as a
\emph{Regime-Dependent Switching Linear Dynamical System} (RD-SLDS).
Each of $K{=}4$ discrete hidden regimes governs a distinct set of
continuous latent dynamics, parameterised by regime-specific
state-transition and observation matrices whose eigenstructure encodes
the characteristic physics of each kill-chain phase.

Exact Bayesian inference in this model is computationally
intractable---the posterior mixture grows as $\mathcal{O}(K^T)$. We
resolve this via a \emph{Structured Variational Approximation} that
factorises the joint posterior into independent continuous and discrete
marginals, reducing inference complexity to $\mathcal{O}(TK^2)$ and
enabling packet-rate detection on standard edge CPU hardware without
GPU acceleration. An \emph{Online Expectation-Maximisation} (EM)
algorithm with exponential forgetting continuously adapts the
Regime Transition Probability Matrix~$\Pim$ and noise parameters to
non-stationary traffic distributions. A \emph{KL-Divergence gating}
mechanism suppresses false positives by distinguishing genuine regime
transitions from benign traffic anomalies.

Evaluated on \textbf{CICIDS2017} and \textbf{UNSW-NB15}, PARD-SSM
achieves an $\text{F}_1$ score of \textbf{98.2\%} across seven attack
classes, a stage attribution accuracy of \textbf{86.1\%}, and an
inference latency of \textbf{$<\!1.2$\,ms per flow}---without GPU
acceleration. Most critically, by exploiting the learned off-diagonal
structure of~$\Pim$, PARD-SSM generates \emph{predictive kill-chain
alerts} an average of \textbf{$\approx$8~minutes before ground-truth
attack onset}, a predictive horizon absent from all prior-art systems
evaluated.
\end{abstract}

\vspace{0.3cm}
\noindent\textbf{Keywords:}
Intrusion Detection Systems,
Switching Linear Dynamical Systems,
Variational Inference,
Kalman Filter,
Cyber Kill-Chain,
Online Learning,
Network Security,
Probabilistic Graphical Models.

\vspace{0.3cm}
\textcolor{DeepBlue!30}{\hrule}
\vspace{0.3cm}

\setcounter{page}{1}

\section{Introduction}
\label{sec:intro}

\subsection{The Multi-Stage Attack Problem}

The adversarial kill chain, formalised by \citet{hutchins2011intelligence}
and operationalised in the MITRE ATT\&CK framework
\citep{strom2018mitre}, describes modern cyber-attacks as sequences of
discrete tactical phases. A sophisticated adversary does not immediately
exfiltrate data upon breaching a perimeter; instead, they methodically
progress through Reconnaissance (enumeration of target surfaces),
Lateral Movement (escalation and pivoting within the network), and
Intrusion before reaching the Exfiltration phase. Each individual
phase, viewed in isolation at the network layer, may generate telemetry
that is statistically compatible with legitimate user activity. A naive
anomaly detector optimised on individual flow statistics will therefore
fail to raise an alert until the attack is already well advanced---or
will generate an overwhelming volume of false positives by flagging
every minor statistical deviation.

The fundamental open problem in network intrusion detection is
therefore not the detection of anomalous packets, but the
\emph{sequential, probabilistic inference of the latent attack stage}
from observable, high-dimensional network telemetry, in real time, and
with sufficient advance warning to permit a proactive defensive
response before the attack has been completed.

\subsection{Limitations of Existing Approaches}

\paragraph{Signature-based IDS.}
Systems such as Snort~\citep{roesch1999snort} and Suricata operate by
matching observed packet headers or payload bytes against a database of
known-malicious patterns. This paradigm has three fundamental
limitations with respect to multi-stage detection. First, it is
categorically incapable of detecting zero-day campaigns for which no
signature has been developed. Second, it provides no probabilistic
stage attribution: a triggered rule identifies an event as
``malicious'' but cannot quantify the probability that this event is
part of a Reconnaissance phase versus an active Exfiltration. Third,
the approach provides no predictive capability: it can only react to
events that have already occurred and matched a known pattern.

\paragraph{Deep Learning Anomaly Detectors.}
A substantial body of recent literature \citep{bontemps2016collective,
vinayakumar2019deep, kim2020intrusion, al2021deep} has applied
recurrent neural architectures---Long Short-Term Memory (LSTM), Gated
Recurrent Units (GRU), and more recently Transformer-based sequence
models \citep{wu2021nettransformer}---to intrusion detection. While
these models represent a genuine technical advance over signature-based
systems in their capacity to detect statistical deviations from a
normal-traffic baseline, they suffer from four material deficiencies
that motivate the present work.

\begin{enumerate}
  \item \textbf{Opacity and absence of stage attribution.}
    Deep-learning IDS are black-box function approximators that map raw
    telemetry sequences to a scalar anomaly score or binary
    classification label. They cannot decompose a detected anomaly into
    its constituent kill-chain components; there is no mechanism by
    which such a system can assert with calibrated probability that the
    current network state corresponds to a Reconnaissance phase rather
    than an Exfiltration phase.

  \item \textbf{No predictive capability.}
    These models are reactive: they detect anomalies that are already
    present in the observed data. They cannot exploit the causal
    structure of the kill chain---the fact that Reconnaissance
    statistically precedes Intrusion, and Intrusion precedes
    Exfiltration---to issue an alert before the more damaging phases of
    an attack have commenced.

  \item \textbf{GPU dependency.}
    Real-time inference with deep sequence models requires GPU
    acceleration, precluding deployment on the resource-constrained
    edge gateway hardware (standard quad-core CPUs, 8--16\,GB RAM) that
    constitutes the majority of enterprise network perimeter
    infrastructure.

  \item \textbf{Absence of principled uncertainty quantification.}
    Deep-learning IDS provide no calibrated confidence interval around
    their anomaly scores, making it impossible for automated response
    orchestration systems to reason about the probability of a
    false-positive event.
\end{enumerate}

\paragraph{Traditional Probabilistic Models.}
Standard Hidden Markov Models (HMMs) and single-regime Kalman Filters
\citep{shumway2000time} offer transparency and a principled
probabilistic framework but fail on multi-stage adversarial traffic due
to critical structural assumptions. Standard HMMs assume conditional
independence of observations given the hidden state, which is violated
by high-dimensional, temporally correlated network telemetry.
Single-regime Kalman Filters assume a single, fixed linear dynamical
model for the entire time series, which is fundamentally violated when
the traffic transitions between structurally distinct kill-chain phases.
The computational intractability of exact inference in Switching Linear
Dynamical Systems \citep{ghahramani2000variational, murphy1998switching}
has historically prevented their application to real-time network
monitoring.

\subsection{Contributions of this Paper}

\begin{tcolorbox}[contributionbox]
\noindent\textbf{Summary of Contributions.}
\begin{enumerate}[leftmargin=1.4em]
  \item \textbf{RD-SLDS Generative Model:} A Regime-Dependent
    Switching Linear Dynamical System that captures the distinct
    eigenstructures of four kill-chain phases within a single unified
    probabilistic model (\Cref{sec:model}).

  \item \textbf{Structured Variational Inference:} A tractable
    $\mathcal{O}(TK^2)$ approximation to the intractable
    $\mathcal{O}(K^T)$ exact posterior, via a novel mean-field ELBO
    optimisation over alternating regime-weighted Kalman smoother and
    HMM Forward-Backward passes (\Cref{sec:inference}).

  \item \textbf{Online EM Adaptation:} An exponential-forgetting
    online EM algorithm that continuously updates the Regime Transition
    Probability Matrix~$\Pim$ and regime-specific noise parameters from
    the live traffic stream (\Cref{sec:onlineem}).

  \item \textbf{KL-Divergence Gating:} An information-theoretic
    false-positive suppression mechanism that distinguishes genuine
    adversarial regime transitions from benign traffic anomalies
    (\Cref{sec:alerting}).

  \item \textbf{Predictive Kill-Chain Alerting:} By exploiting the
    learned off-diagonal structure of~$\Pim$, PARD-SSM generates
    predictive alerts $\approx$8~minutes before ground-truth attack
    onset on CICIDS2017, a capability absent from all evaluated
    baselines (\Cref{sec:experiments}).
\end{enumerate}
\end{tcolorbox}

\subsection{Paper Organisation}

\Cref{sec:related} reviews related work.
\Cref{sec:background} establishes mathematical background.
\Cref{sec:model} presents the generative model.
\Cref{sec:feov} describes the feature extraction module.
\Cref{sec:inference} derives the variational inference algorithm.
\Cref{sec:onlineem} presents the online EM procedure.
\Cref{sec:alerting} describes KL-gating and predictive alerting.
\Cref{sec:experiments} presents experimental results.
\Cref{sec:discussion} discusses limitations and future work.
\Cref{sec:conclusion} concludes.

\section{Related Work}
\label{sec:related}

\subsection{Intrusion Detection and Kill-Chain Modelling}

The systematic characterisation of multi-stage adversarial campaigns
as a kill chain was introduced by \citet{hutchins2011intelligence} and
subsequently operationalised at scale in the MITRE ATT\&CK matrix
\citep{strom2018mitre}. Early computational approaches to kill-chain
modelling employed static Bayesian networks and attack graphs
\citep{sheyner2002automated, ning2002constructing} to correlate
discrete alert events. These methods are fundamentally predicated on
the availability of accurate point alerts from a lower-level IDS and
are therefore subject to all the limitations of those systems.

\citet{ourston2003applications} applied HMMs to model multi-step
intrusion scenarios, demonstrating that sequence-level modelling
outperforms stateless classifiers. \citet{li2013intrusion} extended
this to continuous-valued network features via Gaussian-emission HMMs.
However, both approaches use single-emission-distribution HMMs that
cannot capture the structural non-stationarity of traffic dynamics
across kill-chain phases, and neither provides the Kalman-filtered
continuous latent state representation that is the key enabler of
early detection in PARD-SSM.

\subsection{Variational Inference for Switching Dynamical Systems}

The mathematical foundations of PARD-SSM draw on the Switching Linear
Dynamical System literature \citep{murphy1998switching,
ghahramani2000variational}. \citet{ghahramani2000variational} first
proposed a variational Bayesian approximation for S-LDS inference,
demonstrating tractable coordinate-ascent optimisation of the ELBO
over factorised continuous and discrete marginals. This framework was
applied to speech modelling by \citet{deng2004switching} and to neural
population dynamics by \citet{linderman2017bayesian}.

To the best of our knowledge, PARD-SSM is the first application of variational S-LDS inference to real-time network intrusion detection, and the first to introduce the online EM adaptation mechanism required to track the non-stationary drift of enterprise network traffic
distributions.

\subsection{Deep Learning for Network Intrusion Detection}

The application of LSTMs to NIDS was pioneered by
\citet{bontemps2016collective}. Subsequent work by
\citet{vinayakumar2019deep} demonstrated that deep stacked LSTM
architectures achieve strong performance on the CICIDS2017 benchmark.
\citet{wu2021nettransformer} applied the Transformer self-attention
mechanism to flow-level traffic classification. These works share the
limitations enumerated in \Cref{sec:intro}: opacity, GPU dependency,
no stage attribution, and no predictive capability.

\subsection{Bayesian Anomaly Detection in Networks}

\citet{lane1999temporal} proposed temporal sequence learning for
user behaviour anomaly detection.
\citet{kruegel2003anomaly} applied Bayesian Network classifiers to
intrusion detection. \citet{heard2010bayesian} developed a Bayesian
change-point detection model for network traffic.
The distinguishing characteristic of PARD-SSM relative to these
methods is the \emph{joint} inference over continuous latent network
health state and discrete kill-chain phase in a single generative
model, enabling predictive alerting via the learned transition
structure.

\section{Mathematical Background}
\label{sec:background}

\subsection{Linear Dynamical Systems and the Kalman Filter}

A Linear Dynamical System (LDS) models a time series
$\{\yv_t\}_{t=1}^T$ as arising from an unobserved latent state
$\{\xv_t\}_{t=1}^T$ via the pair of equations:
\begin{align}
  \xv_t &= \Amat\xv_{t-1} + \wv_t, &
  \wv_t &\sim \N(\bm{0},\,\Qmat),
  \label{eq:lds_transition} \\
  \yv_t &= \Cmat\xv_t + \vv_t, &
  \vv_t &\sim \N(\bm{0},\,\Rmat).
  \label{eq:lds_observation}
\end{align}
The Kalman Filter \citep{kalman1960new} computes the exact posterior
$p(\xv_t \given \yv_{1:t}) = \N(\hat{\xv}_{t|t},\,\Pmat_{t|t})$
via recursive prediction and update steps:
\begin{align}
  \hat{\xv}_{t|t-1} &= \Amat\hat{\xv}_{t-1|t-1}, \label{eq:kf_pred_mean}\\
  \Pmat_{t|t-1} &= \Amat\Pmat_{t-1|t-1}\Amat^\top + \Qmat, \label{eq:kf_pred_cov}\\
  \ev_t &= \yv_t - \Cmat\hat{\xv}_{t|t-1}, \label{eq:kf_innov}\\
  \Smat_t &= \Cmat\Pmat_{t|t-1}\Cmat^\top + \Rmat, \label{eq:kf_S}\\
  \Kmat_t &= \Pmat_{t|t-1}\Cmat^\top\Smat_t^{-1}, \label{eq:kf_gain}\\
  \hat{\xv}_{t|t} &= \hat{\xv}_{t|t-1} + \Kmat_t\ev_t, \label{eq:kf_update_mean}\\
  \Pmat_{t|t} &= (\Imat - \Kmat_t\Cmat)\Pmat_{t|t-1}. \label{eq:kf_update_cov}
\end{align}
The Rauch-Tung-Striebel (RTS) smoother \citep{shumway2000time}
computes the smoothed posterior
$p(\xv_t \given \yv_{1:T}) = \N(\hat{\xv}_{t|T},\,\Pmat_{t|T})$
via a backward pass over the Kalman filter outputs.

\subsection{Hidden Markov Models}

A Hidden Markov Model (HMM) models a discrete latent state sequence
$\{\regime_t\}_{t=1}^T$, $\regime_t \in \{0,\ldots,K-1\}$, via a
transition matrix $\Pim \in [0,1]^{K\times K}$ and emission
distribution $p(\yv_t \given \regime_t)$. The HMM Forward-Backward
algorithm \citep{rabiner1989tutorial} computes the smoothed marginals
$\gamma_t(s) = p(\regime_t = s \given \yv_{1:T})$ and pairwise
marginals $\xi_t(s,s') = p(\regime_{t-1}=s,\,\regime_t=s' \given \yv_{1:T})$
in $\mathcal{O}(TK^2)$ time.

\subsection{Variational Inference and the ELBO}

Variational inference \citep{blei2017variational} approximates an
intractable posterior $p(\bm{\theta} \given \mathcal{D})$ by a
tractable distribution $q(\bm{\theta})$ that minimises the
KL-divergence:
\begin{equation}
  q^* = \operatorname*{argmin}_{q \in \mathcal{Q}}
  \KLd\bigl[q(\bm{\theta}) \,\|\, p(\bm{\theta}\given\mathcal{D})\bigr].
  \label{eq:vi_kl}
\end{equation}
This is equivalent to maximising the Evidence Lower BOund:
\begin{equation}
  \log p(\mathcal{D}) \;\geq\;
  \ELBO(q) = \Exp_q[\log p(\mathcal{D},\bm{\theta})]
  - \Exp_q[\log q(\bm{\theta})],
  \label{eq:elbo_def}
\end{equation}
where $\log p(\mathcal{D}) - \ELBO(q) =
\KLd[q\|p(\cdot|\mathcal{D})] \geq 0$.

\section{The PARD-SSM Generative Model}
\label{sec:model}

\subsection{Overview and System Architecture}

PARD-SSM models the joint evolution of network telemetry
$\{\yv_t\}_{t=1}^T$ as arising from two coupled hidden processes: a
discrete regime process $\{\regime_t\}$ encoding the current kill-chain
phase, and a continuous latent state process $\{\xv_t\}$ encoding the
``network health momentum'' or flow dynamics. The system architecture,
illustrated in \Cref{fig:architecture}, comprises six functional
modules: the Multi-Dimensional Telemetry Ingestion and Normalisation
module (MTIN), the Feature Extraction and Observation Vector
Construction module (FEOV), the Parallel Regime-Specific Kalman Filter
Bank (PRKFB), the Variational Switching Inference module (VSIPC), the
Online EM Parameter Update module (OEMPU), and the KL-Divergence
Gating and Kill-Chain Alerting module (PKAR).

\begin{figure}[t]
\centering
\begin{tikzpicture}[node distance=0.85cm and 1.4cm,
  mod/.style={rectangle, rounded corners=5pt,
    minimum width=2.8cm, minimum height=0.95cm,
    text width=2.7cm, align=center,
    draw=#1!70!black, fill=#1!12,
    font=\sffamily\footnotesize\bfseries, text=NavyBlue,
    line width=0.55pt},
  iobox/.style={rectangle, rounded corners=4pt,
    minimum width=2.6cm, minimum height=0.78cm,
    text width=2.4cm, align=center,
    draw=MidGray!60, fill=LightGray,
    font=\sffamily\scriptsize, text=DarkGray, line width=0.45pt},
  arr/.style={-{Stealth[length=5pt,width=3.5pt]},
    draw=NavyBlue!60, line width=0.65pt},
  farr/.style={-{Stealth[length=4.5pt,width=3pt]},
    draw=Coral, dashed, dash pattern=on 4pt off 2pt, line width=0.6pt},
  lbl/.style={font=\sffamily\tiny, text=MidGray,
    fill=white, inner sep=1.5pt}
]
\node[iobox] (NET)
  {\textbf{Live Network Traffic}\\[1pt]\textit{\tiny Packets / Flows / Logs}};
\node[mod=DeepBlue, below=0.75cm of NET] (MTIN) {Module 1\\[2pt]MTIN};
\node[mod=Teal, below=0.85cm of MTIN] (FEOV) {Module 2\\[2pt]FEOV};
\node[mod=Purple, below=0.95cm of FEOV] (PRKFB) {Module 3\\[2pt]PRKFB};
\node[mod=DeepBlue, below=0.95cm of PRKFB] (VSIPC) {Module 4\\[2pt]VSIPC};
\node[mod=Teal, below=0.85cm of VSIPC] (PKAR) {Module 6\\[2pt]PKAR};
\node[mod=Amber, right=2.4cm of PRKFB] (OEMPU) {Module 5\\[2pt]OEMPU};
\node[iobox, below=0.75cm of PKAR] (ALERT)
  {\textbf{Probabilistic Alerts}\\[1pt]\textit{\tiny Stage-resolved / Predictive}};
\draw[arr] (NET)   -- node[lbl,right]{\tiny Raw telemetry} (MTIN);
\draw[arr] (MTIN)  -- node[lbl,right]{\tiny Norm.\ epochs} (FEOV);
\draw[arr] (FEOV)  -- node[lbl,right]{\tiny $\yv_t\in\R^{17}$} (PRKFB);
\draw[arr] (PRKFB) -- node[lbl,right]{\tiny $\log\Lambda_t^{(s)},\hat{\xv}_t^{(s)}$} (VSIPC);
\draw[arr] (VSIPC) -- node[lbl,right]{\tiny $\gamma_t(s),\mathrm{KL}_t$} (PKAR);
\draw[arr] (PKAR)  -- (ALERT);
\draw[arr] (VSIPC.east) to[out=0,in=270]
  node[lbl,right,pos=0.5]{\tiny $\gamma_t,\xi_t$} (OEMPU.south);
\draw[farr] (OEMPU.north) to[out=90,in=0]
  node[lbl,right,pos=0.45]{\tiny Updated $\Pim,\Qmat^{(s)},\Rmat^{(s)}$}
  (PRKFB.east);
\node[font=\sffamily\tiny\bfseries,text=Coral,
      right=0.35cm of OEMPU.south east, yshift=-1.3cm]
  {$\longleftarrow$ Online EM feedback};
\end{tikzpicture}
\caption{PARD-SSM system architecture. Six functional modules form
  a processing pipeline from raw network telemetry to probabilistic
  kill-chain alerts. The dashed feedback arrow denotes the online EM
  loop by which Module~5 (OEMPU) continuously updates regime-specific
  parameters in Module~3 (PRKFB).}
\label{fig:architecture}
\end{figure}

\subsection{Discrete Regime Variable}
\label{sec:regime_var}

Let $\regime_t \in \{0, 1, 2, 3\}$ be a discrete latent variable
denoting the active kill-chain regime at time $t$, with the following
semantic interpretation:

\begin{center}
\small
\begin{tabular}{clp{7.5cm}}
\toprule
\textbf{Index} & \textbf{Regime} & \textbf{Characterisation} \\
\midrule
$s=0$ & Normal & Stable, mean-reverting traffic dynamics.
  Spectral radius $\rho(\Amat^{(0)}) < 1$.\\
$s=1$ & Reconnaissance & High-entropy, high-frequency scanning.
  Elevated process noise $\Qmat^{(1)}$ in port-diversity dimensions. \\
$s=2$ & Lateral Movement & Post-exploitation internal enumeration.
  Elevated $\Qmat^{(2)}$ in stateful behavioural dimensions. \\
$s=3$ & Exfiltration & Monotonically growing outbound data.
  Spectral radius $\rho(\Amat^{(3)}) \gtrsim 1$.\\
\bottomrule
\end{tabular}
\end{center}

The sequence $\{\regime_t\}_{t=1}^T$ is governed by a first-order
stationary Markov chain with the \emph{Regime Transition Probability
Matrix} $\Pim \in [0,1]^{K\times K}$:
\begin{equation}
  p(\regime_t = j \given \regime_{t-1} = i) = \Pi_{ij},
  \qquad \sum_{j=0}^{K-1}\Pi_{ij} = 1 \quad \forall\,i.
  \label{eq:transition_matrix}
\end{equation}
The off-diagonal entries $\Pi_{01}$, $\Pi_{12}$, $\Pi_{23}$ encode the
kill-chain progression probabilities. The matrix~$\Pim$ is \emph{not}
fixed a priori; it is continuously updated by the online EM algorithm
(\Cref{sec:onlineem}), allowing the system to learn the statistical
signature of attack-stage transitions from the live traffic stream.

\subsection{Continuous Latent State and State-Space Equations}
\label{sec:ssm}

Conditioned on the active regime $\regime_t = s$, the $n$-dimensional
continuous latent state $\xv_t \in \R^n$ ($n = 8$ in the preferred
embodiment) evolves according to a regime-specific linear dynamical
model:

\begin{tcolorbox}[theorembox, title={\small\bfseries\color{DeepBlue}
    State-Space Model (Conditioned on Regime $\regime_t = s$)}]
\begin{align}
  \text{\emph{State-Transition:}}\quad
  \xv_t &= \Amat^{(s)}\xv_{t-1} + \uv_t + \wv_t^{(s)},
  &\wv_t^{(s)} &\sim \N\!\bigl(\bm{0},\,\Qmat^{(s)}\bigr),
  \tag{SSM-1}\label{eq:ssm_trans}\\[4pt]
  \text{\emph{Observation:}}\quad
  \yv_t &= \Cmat^{(s)}\xv_t + \vv_t^{(s)},
  &\vv_t^{(s)} &\sim \N\!\bigl(\bm{0},\,\Rmat^{(s)}\bigr).
  \tag{SSM-2}\label{eq:ssm_obs}
\end{align}
\end{tcolorbox}

\noindent
Here $\Amat^{(s)} \in \R^{n\times n}$ is the \emph{regime-specific
state-transition matrix}, $\Cmat^{(s)} \in \R^{m\times n}$ is the
\emph{regime-specific observation (emission) matrix}, $\uv_t \in \R^n$
is a known exogenous control input (e.g., scheduled backup traffic),
and $m = 17$ is the observation dimension. The matrices $\Qmat^{(s)}$
and $\Rmat^{(s)}$ are the $n\times n$ and $m\times m$ process and
observation noise covariance matrices, respectively.

\begin{remark}[Eigenstructure as Discriminative Signal]
The eigenvalues of $\Amat^{(s)}$ serve as a primary discriminative
mechanism of the model. For the Normal regime, eigenvalues within the
unit circle ($\rho(\Amat^{(0)}) < 1$) correspond to mean-reverting
traffic dynamics. For the Exfiltration regime, eigenvalues on or outside
the unit circle ($\rho(\Amat^{(3)}) \gtrsim 1$) capture the sustained
growth in outbound data volume characteristic of active exfiltration.

The qualitative structure of these regime-specific dynamics is central
to the model design; detailed parameter configurations are omitted for
brevity and are described in the accompanying implementation.
\end{remark}

\subsection{Joint Generative Distribution}

The complete generative model specifies the following joint
distribution over latent variables and observations:
\begin{equation}
  p\!\bigl(\xv_{1:T},\,\regime_{1:T},\,\yv_{1:T}\bigr)
  = p(\regime_1)\,p(\xv_1)
  \prod_{t=2}^{T}
    p(\regime_t \given \regime_{t-1})\,
    p(\xv_t \given \xv_{t-1}, \regime_t)\,
    p(\yv_t \given \xv_t, \regime_t),
  \label{eq:joint}
\end{equation}
where the individual factors are:
\begin{align*}
  p(\regime_t \given \regime_{t-1}) &= \Pi_{\regime_{t-1},\regime_t},\\
  p(\xv_t \given \xv_{t-1}, \regime_t{=}s)
    &= \N\!\bigl(\Amat^{(s)}\xv_{t-1}+\uv_t,\,\Qmat^{(s)}\bigr),\\
  p(\yv_t \given \xv_t, \regime_t{=}s)
    &= \N\!\bigl(\Cmat^{(s)}\xv_t,\,\Rmat^{(s)}\bigr).
\end{align*}

\section{Feature Extraction and Observation Vector}
\label{sec:feov}

\subsection{Multi-Dimensional Telemetry Ingestion}

The MTIN module interfaces with the network infrastructure via
libpcap-mode raw packet capture, NetFlow~v9 or IPFIX flow records, or
structured log ingestion (Syslog/CEF/LEEF). A sliding window of
duration $W$ seconds (default $W{=}1$\,s) partitions the continuous
traffic stream into discrete overlapping observation epochs. All
features are standardised to zero mean and unit variance using
Welford's online algorithm over a 1-hour baseline calibration period.

\subsection{The 17-Dimensional Observation Vector}
\label{sec:feov_17}

The FEOV module maps per-window flow statistics into the observation
vector $\yv_t \in \R^{17}$, grouped into four semantic categories:
\begin{equation}
  \yv_t = \bigl[
    \underbrace{\yv_t^{\textsc{iat}}}_{\text{dims }1\text{--}4},\;
    \underbrace{\yv_t^{\textsc{port}}}_{\text{dims }5\text{--}8},\;
    \underbrace{\yv_t^{\textsc{pay}}}_{\text{dims }9\text{--}13},\;
    \underbrace{\yv_t^{\textsc{stat}}}_{\text{dims }14\text{--}17}
  \bigr]^\top.
\end{equation}

\paragraph{Group~1: Temporal and Inter-Arrival Time Features (dims.~1--4).}
Let $\{d_i\}$ denote the sequence of inter-arrival times between
consecutive packets in a flow within window~$W$.
\begin{itemize}
  \item $d_1 = \mu_{\mathrm{IAT}} = |\{d_i\}|^{-1}\sum_i d_i$:
    mean inter-arrival time.
  \item $d_2 = \sigma^2_{\mathrm{IAT}} = |\{d_i\}|^{-1}
    \sum_i (d_i - \mu_{\mathrm{IAT}})^2$: IAT variance.
  \item $d_3 = \text{BPP} = \text{TotalBytes}/\text{TotalPackets}$:
    bytes per packet.
  \item $d_4 = \Delta_t$: flow duration in seconds.
\end{itemize}

\paragraph{Group~2: Protocol and Port-Diversity Features (dims.~5--8).}
\begin{itemize}
  \item $d_5 = H_{\mathrm{dport}} =
    -\sum_p P(\mathrm{dport}{=}p)\log_2 P(\mathrm{dport}{=}p)$:
    Shannon entropy over destination port numbers.
  \item $d_6 = N_{\mathrm{dip}}$: count of distinct destination IPs.
  \item $d_7 = H_{\mathrm{proto}}$: Shannon entropy of the protocol distribution.
  \item $d_8 = R_{\mathrm{SYN}} = N_{\mathrm{SYN}}/N_{\mathrm{ACK}}$:
    ratio of TCP SYN to ACK packets.
\end{itemize}

\paragraph{Group~3: Payload and Content Features (dims.~9--13).}
\begin{itemize}
  \item $d_9 = H_{\mathrm{payload}} = -\sum_b P(b)\log_2 P(b)$:
    Shannon entropy of payload byte-value distribution.
  \item $d_{10} = \sigma^2_{\mathrm{plen}}$: variance of packet payload lengths.
  \item $d_{11} = R_{\mathrm{bytes}}$, $d_{12} = R_{\mathrm{pkts}}$:
    inbound-to-outbound byte and packet ratios.
  \item $d_{13} = F_{\mathrm{rate}}$: IP fragment rate.
\end{itemize}

\paragraph{Group~4: Stateful and Behavioural Features (dims.~14--17).}
\begin{itemize}
  \item $d_{14} = \mathrm{CFR}$: connection failure rate.
  \item $d_{15} = \mathrm{DNS\_QR}$: DNS query rate weighted by DGA score.
  \item $d_{16} = \mathrm{ICMP}_{\mathrm{rate}}$: ICMP Echo Request/Reply rate.
  \item $d_{17} = H_{\mathrm{http}}$: Shannon entropy of HTTP method distribution.
\end{itemize}

\section{Variational Switching Inference}
\label{sec:inference}

\subsection{The Intractability of Exact Inference}
\label{sec:intractability}

We seek the posterior distribution over all hidden variables given all
observations:
\begin{equation}
  p\!\bigl(\xv_{1:T},\,\regime_{1:T} \given \yv_{1:T}\bigr)
  \propto
  \prod_{t=1}^T
    p(\yv_t \given \xv_t,\regime_t)\,
    p(\xv_t \given \xv_{t-1},\regime_t)\,
    p(\regime_t \given \regime_{t-1}).
  \label{eq:true_post}
\end{equation}
Exact evaluation requires marginalising over all $K^T$ possible regime
sequences. For $K=4$ and $T=100$, this is $4^{100} \approx 1.6 \times 10^{60}$
components---manifestly intractable.

\begin{proposition}[Intractability of Exact Switching Inference]
  \label{prop:intractable}
  The exact posterior $p(\xv_{1:T}, \regime_{1:T} \given \yv_{1:T})$
  in a Switching LDS with $K$ regimes and $T$ time steps cannot be
  computed in time polynomial in $T$. The number of distinct Gaussian
  components in $p(\xv_T \given \yv_{1:T})$ is exactly $K^T$.
\end{proposition}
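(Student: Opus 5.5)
The plan is to prove the component count by induction on $T$, and then read the complexity claim as a statement about the size of the exact representation. The complexity part cannot be proved unconditionally; it only follows once we fix that the output must be an explicit Gaussian mixture.

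\emph{Step 1: recursion.} I will show by induction that
\begin{equation*}
p(\xv_t \given \yv_{1:t}) = \sum_{\regime_{1:t}} w_{\regime_{1:t}}\,\N\bigl(\xv_t;\,\hat{\xv}_{t|t}^{(\regime_{1:t})},\,\Pmat_{t|t}^{(\regime_{1:t})}\bigr),
\end{equation*}
with one term for each regime history $\regime_{1:t}$. Here $\hat{\xv}_{t|t}^{(\regime_{1:t})}$ and $\Pmat_{t|t}^{(\regime_{1:t})}$ come from running the Kalman recursions \eqref{eq:kf_pred_mean}--\eqref{eq:kf_update_cov} with the time-varying matrices $\Amat^{(\regime_\tau)},\Cmat^{(\regime_\tau)},\Qmat^{(\regime_\tau)},\Rmat^{(\regime_\tau)}$. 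The weights satisfy $w_{\regime_{1:t}} \propto w_{\regime_{1:t-1}}\,\Pi_{\regime_{t-1}\regime_t}\,\N(\ev_t;\bm{0},\Smat_t)$.

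\begin{itemize}
  \item \emph{Base case.} This is $p(\regime_1)p(\xv_1)$ followed by the observation update, giving $K$ terms.
  \item \emph{Inductive step.} Multiply by the factors in \eqref{eq:joint}. Propagating each Gaussian through the $K$ possible next regimes and conditioning on $\yv_{t+1}$ keeps each term Gaussian, because linear-Gaussian conditioning preserves Gaussianity. Each term splits into $K$, so there are at most $K^{t+1}$ terms.
\end{itemize}

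Taking $t=T$ gives the upper bound $K^T$.

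\emph{Step 2: exactness.} The word ``exactly'' needs the $K^T$ components to be pairwise distinct with nonzero weights. This cannot hold for all parameter values: if all regimes share the same $(\Amat,\Cmat,\Qmat,\Rmat)$, everything collapses to a single Gaussian. I will therefore prove the claim for \emph{generic} parameters, meaning outside a Lebesgue-null set.

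\begin{itemize}
  \item \emph{Nonzero weights.} These hold when all $\Pi_{ij}>0$, since each weight is then a product of positive transition probabilities and Gaussian densities.
  \item \emph{Distinctness.} For two histories that differ, take the last time $\tau$ at which they differ. The pairs $(\hat{\xv}_{T|T},\Pmat_{T|T})$ for the two histories are rational functions of the parameters. Their difference is not identically zero, because a specific choice of parameters separates them, for example distinct $\Qmat^{(s)}$ scalings that propagate to distinct covariances. A nonzero rational function vanishes only on a null set, and there are finitely many pairs of histories, so generically all components are distinct.
\end{itemize}

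I expect this step to be the main obstacle. Covariances can coincide when the Riccati recursion contracts, so the witnessing parameter choice must be constructed carefully. My first attempt is scalar, $n=m=1$: take $\Amat^{(s)}=a$ and $\Cmat^{(s)}=1$ for all $s$, with process noises $q_s$ chosen so that the map from each history to its mean is injective, for instance $q_s$ drawn generically so that the Kalman gains differ.

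\emph{Step 3: complexity.} Distinct Gaussian densities are linearly independent as functions, so a mixture of pairwise distinct Gaussians with nonzero weights has a unique minimal representation. Consequently any exact representation of $p(\xv_T\given\yv_{1:T})$ as a Gaussian mixture needs $K^T$ components, and any algorithm that outputs it takes $\Omega(K^T)$ time. The joint posterior \eqref{eq:true_post} contains this marginal, so it inherits the bound. I will state explicitly that ``cannot be computed in polynomial time'' is meant in this representational sense. It is not a complexity-theoretic hardness result, because scalar queries such as a single $\gamma_T(s)$ are not covered by the argument.
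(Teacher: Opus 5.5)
Your Step 1 is the paper's entire proof. The paper observes that $p(\xv_1\given\yv_1)$ has $K$ components and that each step multiplies the count by $K$. It then treats both ``exactly'' and the polynomial-time claim as immediate. You are right that this count silently assumes no components merge and no weights vanish; this fails, for example, when all regimes share parameters. Your identifiability argument in Step 3, with the representational reading of ``cannot be computed'', is sound.

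\textbf{The gap is your distinctness witness in Step 2.} In your own base case $p(\xv_1)$ carries no regime label. So $\regime_1$ affects the Gaussian attached to $\regime_{1:T}$ only through $\Cmat^{(\regime_1)},\Rmat^{(\regime_1)}$ in the $t=1$ update, and $\Qmat^{(\regime_1)}$ never enters. Your family has $\Amat^{(s)}=a$, $\Cmat^{(s)}=1$, and only the $q_s$ varying. Two histories differing only in $\regime_1$ then give identical components for every choice of the $q_s$. That family therefore has at most $K^{T-1}$ distinct components and cannot witness those pairs. The regime must also enter the observation model. For example, distinct scalar $r_s$ give distinct $P_{1|1}=(P_0^{-1}+r_s^{-1})^{-1}$.

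\textbf{How to close it.} Your contraction worry is resolved by the information form $\Pmat_{t|t}=\bigl((\Amat\Pmat_{t-1|t-1}\Amat^\top+\Qmat)^{-1}+\Cmat^\top\Rmat^{-1}\Cmat\bigr)^{-1}$. This map is injective in $\Pmat_{t-1|t-1}$ whenever $\Amat$ is invertible. So once two histories agree after their last differing time $\tau$, any difference in $\Pmat_{\tau|\tau}$ survives to time $T$. Contraction makes covariances close, never equal. This is what makes your unused ``last differing time'' reduction work, and it settles pairs differing only in $\regime_1$.

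For pairs differing somewhere in $\regime_{2:T}$, use a shared $r\to\infty$. The scalar recursion tends to $P\mapsto a^2P+q_s$, so
\[
P_{T|T}\to a^{2(T-1)}P_0+\sum_{t=2}^{T}a^{2(T-t)}q_{\regime_t}.
\]
The coefficient of $q_j$ is a polynomial in $a$ that encodes the set $\{t\ge2:\regime_t=j\}$. These limits therefore differ at generic $(a,q)$, so the difference of the two rational functions is not identically zero.

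Two further points finish the argument. A decoupled diagonal system, with the first coordinate as above and the rest fixed, embeds this scalar witness in dimensions $n,m$. And since the covariances do not depend on $\yv_{1:T}$, you get distinctness for every observation sequence once the parameters are generic.
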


\begin{proof}
  At $t=1$, $p(\xv_1 \given \yv_1)$ is a $K$-component mixture.
  At each subsequent step, each component propagates through all $K$
  regime-conditioned dynamics, multiplying the count by~$K$.
  By induction, $p(\xv_T \given \yv_{1:T})$ is a $K^T$-component
  mixture. $\square$
\end{proof}

\subsection{The Structured Variational Approximation}
\label{sec:vb}

\begin{tcolorbox}[theorembox, title={\small\bfseries\color{DeepBlue}
    Mean-Field Factorisation}]
We approximate the true posterior by the \emph{mean-field factorised}
distribution:
\begin{equation}
  q\!\bigl(\xv_{1:T},\,\regime_{1:T}\bigr)
  \;=\; q(\xv_{1:T})\;q(\regime_{1:T}),
  \label{eq:meanfield}
\end{equation}
where $q(\xv_{1:T})$ is a linear Gaussian and $q(\regime_{1:T})$ is
a discrete Markovian distribution parameterised by smoothed marginals
$\gamma_t(s)$ and pairwise marginals $\xi_t(s,s')$.
\end{tcolorbox}

The optimal factorised approximation maximises the ELBO:
\begin{equation}
  \log p(\yv_{1:T}) \;\geq\; \ELBO(q)
  \;=\; \Exp_q\!\bigl[
    \log p(\yv_{1:T},\,\xv_{1:T},\,\regime_{1:T})
  \bigr]
  \;-\;
  \Exp_q\!\bigl[
    \log q(\xv_{1:T},\,\regime_{1:T})
  \bigr].
  \label{eq:elbo_main}
\end{equation}

\begin{proposition}[ELBO Decomposition]
  \label{prop:elbo_decomp}
  Under the factorisation~\eqref{eq:meanfield}, the ELBO decomposes as:
  \begin{align}
    \ELBO(q) \;=\;
    &\underbrace{
      \sum_{t=1}^T \sum_{s=0}^{K-1}\gamma_t(s)\,
      \Exp_q\!\bigl[\log p(\yv_t \given \xv_t, \regime_t{=}s)\bigr]
    }_{\text{(i) Observation likelihood}}
    \nonumber\\
    &+
    \underbrace{
      \sum_{t=2}^T \sum_{s=0}^{K-1}\gamma_t(s)\,
      \Exp_q\!\bigl[\log p(\xv_t \given \xv_{t-1}, \regime_t{=}s)\bigr]
      - \Exp_q[\log q(\xv_{1:T})]
    }_{\text{(ii) Continuous KL regulariser}}
    \nonumber\\
    &+
    \underbrace{
      \sum_{t=2}^T \sum_{s,s'}\xi_t(s,s')\log\Pi_{ss'}
      + \sum_s \gamma_1(s)\log\pi_0(s)
      + H(q(\regime_{1:T}))
    }_{\text{(iii) Discrete KL regulariser}},
    \label{eq:elbo_decomp}
  \end{align}
  where $H(q(\regime_{1:T})) = -\Exp_q[\log q(\regime_{1:T})]$ is the
  entropy of the discrete approximate posterior.
\end{proposition}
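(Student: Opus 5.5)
The plan is to start from the definition of the ELBO in \eqref{eq:elbo_main} and substitute the factorisation of the joint in \eqref{eq:joint}. Taking logarithms turns the product into a sum:
\[
\log p(\yv_{1:T},\xv_{1:T},\regime_{1:T}) = \log p(\regime_1) + \log p(\xv_1) + \sum_{t=2}^T\bigl[\log\Pi_{\regime_{t-1}\regime_t} + \log p(\xv_t\given\xv_{t-1},\regime_t) + \log p(\yv_t\given\xv_t,\regime_t)\bigr].
\]
The $t=1$ observation term also has to be included, as in \eqref{eq:true_post}; the sum in (i) runs from $t=1$. I therefore read the model as containing $p(\yv_1\given\xv_1,\regime_1)$, which \eqref{eq:joint} omits by what appears to be a typographical slip. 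Writing $\pi_0(s)=p(\regime_1=s)$, I take the expectation under $q=q(\xv_{1:T})\,q(\regime_{1:T})$ term by term, using linearity of expectation.

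The key step is to handle each term that couples $\xv$ and $\regime$ by indicator expansion. Writing
\[
\log p(\yv_t\given\xv_t,\regime_t)=\sum_s \mathbb{1}[\regime_t=s]\,\log p(\yv_t\given\xv_t,\regime_t{=}s),
\]
independence under the mean-field factorisation \eqref{eq:meanfield} gives
\[
\Exp_q\bigl[\mathbb{1}[\regime_t=s]\,f(\xv)\bigr]=\gamma_t(s)\,\Exp_{q(\xv)}[f(\xv)].
\]
This yields term (i), and the same argument applied to the transition density yields the first part of (ii). The discrete terms depend only on $\regime$. For those, $\Exp_q\bigl[\log\Pi_{\regime_{t-1}\regime_t}\bigr]=\sum_{s,s'}\xi_t(s,s')\log\Pi_{ss'}$ by the definition of the pairwise marginals, and $\Exp_q[\log p(\regime_1)]=\sum_s\gamma_1(s)\log\pi_0(s)$. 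It is worth stating explicitly that only the one- and two-slice marginals of $q(\regime_{1:T})$ enter these terms, so the Markov structure of $q(\regime_{1:T})$ is what makes them sufficient statistics.

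For the entropy term, the factorisation gives $\log q(\xv,\regime)=\log q(\xv)+\log q(\regime)$, so
\[
-\Exp_q[\log q]=-\Exp_q[\log q(\xv_{1:T})]+H(q(\regime_{1:T})).
\]
The first piece goes into (ii) and the second into (iii). Collecting the terms gives \eqref{eq:elbo_decomp}.

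The main obstacle is bookkeeping rather than conceptual. The term $\Exp_q[\log p(\xv_1)]$ does not appear in the stated decomposition, so I would either fold it into (ii) as part of the continuous regulariser or note that it is a regime-independent constant that is absorbed there. I would also verify that the grouping labelled ``KL regulariser'' is exactly the negative KL between $q$ and the corresponding prior factors. That identification holds only up to how the $\gamma$-weighted transition terms are interpreted, so I would present (ii) and (iii) as labelled groupings and not claim they are exact KL divergences.
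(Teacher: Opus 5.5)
The paper states this proposition without any proof, so there is no competing argument to compare against. Your derivation is the standard one and is correct: expand the log-joint, write each regime-coupled factor as $\sum_s \mathbf{1}[\regime_t=s]\,(\cdot)$, pull out $\gamma_t(s)$ by independence under \eqref{eq:meanfield}, read off the $\xi_t$ and $\gamma_1$ terms, and split $-\Exp_q[\log q]$ into $-\Exp_q[\log q(\xv_{1:T})]+H(q(\regime_{1:T}))$. Your reading that \eqref{eq:joint} should contain $p(\yv_1\given\xv_1,\regime_1)$ is also the right repair.

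The one thing to fix is your treatment of $\Exp_q[\log p(\xv_1)]$. Of your two options, only folding it into (ii) is legitimate. The term is regime-independent, but it is not a constant. For instance, with $p(\xv_1)=\N(\bm{\mu}_0,\bm{\Sigma}_0)$ it equals
\[
-\tfrac12\Bigl[n\log(2\pi)+\log|\bm{\Sigma}_0|+\operatorname{tr}\bigl(\bm{\Sigma}_0^{-1}\Pmat_{1|T}\bigr)+(\hat{\xv}_{1|T}-\bm{\mu}_0)^\top\bm{\Sigma}_0^{-1}(\hat{\xv}_{1|T}-\bm{\mu}_0)\Bigr].
\]
This depends on the mean and covariance of $q(\xv_1)$, and it is exactly what supplies the initial condition to the smoother in the continuous E-step. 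So the identity you actually prove is \eqref{eq:elbo_decomp} with $\Exp_q[\log p(\xv_1)]$ added to group (ii). Equivalently, start the transition sum at $t=1$ under the convention $p(\xv_1\given\xv_0,\regime_1{=}s):=p(\xv_1)$. State plainly that the printed formula omits this term, rather than calling it an absorbed constant.

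Two smaller remarks.
\begin{itemize}
\item Only one- and two-slice marginals enter the expected log-prior because the \emph{model} prior $p(\regime_{1:T})$ is first-order Markov, not because $q$ is. The Markov structure of $q$ is what lets $H(q(\regime_{1:T}))$ itself be written in terms of $\gamma_t$ and $\xi_t$.
\item Group (iii) is exactly $-\KLd\bigl[q(\regime_{1:T})\,\|\,p(\regime_{1:T})\bigr]$, so your caveat about the ``KL'' label is needed only for (ii). There the $\gamma$-weighted geometric mixture of transition densities is unnormalised.
\end{itemize}
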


\subsection{Coordinate-Ascent Optimisation}
\label{sec:coord_ascent}

\subsubsection{Continuous E-Step: Regime-Weighted Kalman Smoother}

With $q(\regime_{1:T})$ fixed, the optimal $q^*(\xv_{1:T})$ is
obtained via a Kalman smoother with regime-weighted effective parameters:
\begin{align}
  \Amat_{\mathrm{eff}}(t) &= \sum_{s=0}^{K-1}\gamma_t^{(k)}(s)\,\Amat^{(s)},
  \label{eq:A_eff}\\
  \Qmat_{\mathrm{eff}}(t) &= \sum_{s=0}^{K-1}\gamma_t^{(k)}(s)\,\Qmat^{(s)}
  + \sum_{s}\gamma_t^{(k)}(s)\bigl[\Amat^{(s)}-\Amat_{\mathrm{eff}}(t)\bigr]
    \Pmat_{t-1|T}\bigl[\Amat^{(s)}-\Amat_{\mathrm{eff}}(t)\bigr]^\top.
  \label{eq:Q_eff}
\end{align}
Complexity: $\mathcal{O}(Tn^3)$.

\subsubsection{Discrete E-Step: HMM Forward-Backward Pass}

With $q(\xv_{1:T})$ fixed, the emission log-probability for regime~$s$
at time~$t$ is:
\begin{equation}
  \log\Lambda_t^{(s)} =
  -\frac{1}{2}\Bigl[
    m\log(2\pi)
    + \log|\Smat_t^{(s)}|
    + \bigl(\ev_t^{(s)}\bigr)^\top
      \bigl(\Smat_t^{(s)}\bigr)^{-1}
      \ev_t^{(s)}
  \Bigr].
  \label{eq:log_likelihood}
\end{equation}
The HMM Forward-Backward algorithm computes:
\begin{align}
  \alpha_t(s) &= \Lambda_t^{(s)}
    \sum_{s'=0}^{K-1}\Pi_{s's}\,\alpha_{t-1}(s'),
  \quad \alpha_1(s) = \Lambda_1^{(s)}\,\pi_0(s),
  \label{eq:forward}\\
  \beta_t(s) &= \sum_{s'=0}^{K-1}\Pi_{ss'}\,
    \Lambda_{t+1}^{(s')}\,\beta_{t+1}(s'),
  \quad \beta_T(s) = 1,
  \label{eq:backward}\\
  \gamma_t^{(k+1)}(s) &= \frac{\alpha_t(s)\,\beta_t(s)}
    {\sum_{s''}\alpha_t(s'')\,\beta_t(s'')},
  \label{eq:smoothed_marginal}\\
  \xi_t(s,s') &= \frac{\alpha_{t-1}(s)\,\Pi_{ss'}\,
    \Lambda_t^{(s')}\,\beta_t(s')}
    {\sum_{j,j'}\alpha_{t-1}(j)\,\Pi_{jj'}\,\Lambda_t^{(j')}\,\beta_t(j')}.
  \label{eq:pairwise_marginal}
\end{align}
Complexity: $\mathcal{O}(TK^2)$.

\begin{proposition}[Computational Complexity]
  \label{prop:complexity}
  One iteration of coordinate-ascent VSIPC has complexity
  $\mathcal{O}(TK^2 + Tn^3)$, polynomial in~$T$, versus exact
  inference complexity $\mathcal{O}(K^T)$.
\end{proposition}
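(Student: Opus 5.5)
The plan is to charge one coordinate-ascent iteration of VSIPC step by step. An iteration is exactly one continuous E-step followed by one discrete E-step, so the cost is the sum of the two, and each is bounded separately.

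\emph{Continuous E-step.} For each $t$, forming $\Amat_{\mathrm{eff}}(t)$ in \eqref{eq:A_eff} is a weighted sum of $K$ matrices of size $n\times n$, costing $\mathcal{O}(Kn^2)$. Forming $\Qmat_{\mathrm{eff}}(t)$ in \eqref{eq:Q_eff} needs, for each $s$, two $n\times n$ matrix products with $\Pmat_{t-1|T}$, costing $\mathcal{O}(Kn^3)$. The filter recursion \eqref{eq:kf_pred_mean}--\eqref{eq:kf_update_cov} and the RTS backward pass then take $\mathcal{O}(n^3 + mn^2 + m^3)$ per step; the $m^3$ term comes from inverting $\Smat_t$. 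Since $K$, $m$ and $n$ are fixed model dimensions with $m=\mathcal{O}(n)$ in the embodiment ($m=17$, $n=8$), these factors are absorbed and the step costs $\mathcal{O}(Tn^3)$, as stated after \eqref{eq:Q_eff}.

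\emph{Discrete E-step.} First compute the $TK$ emission terms $\log\Lambda_t^{(s)}$ of \eqref{eq:log_likelihood}. With $n,m$ fixed, each needs one log-determinant and one quadratic form, at constant cost per entry. The recursions \eqref{eq:forward} and \eqref{eq:backward} each take a sum over $K$ predecessors for each of $K$ states at each of $T$ steps, giving $\mathcal{O}(TK^2)$. The marginals \eqref{eq:smoothed_marginal} and \eqref{eq:pairwise_marginal} add $\mathcal{O}(TK)$ and $\mathcal{O}(TK^2)$ respectively. Summing the two steps gives $\mathcal{O}(TK^2+Tn^3)$, which is linear and hence polynomial in $T$.

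\emph{Comparison with exact inference.} Here I invoke \Cref{prop:intractable}. Exact inference must represent a mixture with $K^T$ Gaussian components, and each component needs at least constant work. Its cost is therefore $\Omega(K^T)$, and the comparison follows.

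The main obstacle is bookkeeping rather than mathematics. The stated bound suppresses the $K$ factor in $\Qmat_{\mathrm{eff}}$ (a true $\mathcal{O}(TKn^3)$) and the $m^3$ cost of inverting $\Smat_t$. I would make this explicit by stating that $K$ and $m$ are treated as constants relative to $T$, or else write the sharper bound $\mathcal{O}(TK^2 + TK(n^3+m^3))$ and note that it reduces to the claimed form.
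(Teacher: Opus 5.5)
Your proposal is correct and follows the paper's own implicit argument: the bound is the sum of the $\mathcal{O}(Tn^3)$ and $\mathcal{O}(TK^2)$ costs annotated after the two E-steps, with the $K^T$ comparison taken from Proposition~\ref{prop:intractable}, and your caveats about the suppressed $K$ factor in $\Qmat_{\mathrm{eff}}$ and the $m^3$ cost of inverting $\Smat_t$ are accurate. The one step your accounting omits is the evaluation of $\ELBO^{(k)}$ via \eqref{eq:elbo_decomp} inside each iteration of Algorithm~\ref{alg:vsipc}, but that step costs at most $\mathcal{O}(TK^2 + TK(n^3+m^3))$, so your sharper bound, and hence the claimed form once $K$ and $m$ are treated as constants, still holds.
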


\begin{algorithm}[t]
\DontPrintSemicolon
\SetAlgoLined
\caption{VSIPC: Variational Switching Inference for PARD-SSM}
\label{alg:vsipc}
\KwIn{$\yv_{1:T}$;\; regime params
  $\Theta = \{\Amat^{(s)}, \Cmat^{(s)}, \Qmat^{(s)}, \Rmat^{(s)}\}_{s=0}^{K-1}$;\;
  $\Pim$;\; prior $\pi_0$;\; tolerance $\varepsilon$;\; max iters $k_{\max}$}
\KwOut{$\{\gamma_t(s)\}$;\; $\{\xi_t(s,s')\}$;\;
  $\{\hat{\xv}_{t|T},\Pmat_{t|T}\}$;\; $\ELBO^*$}
\BlankLine
Initialise $\gamma_t^{(0)}(s) \leftarrow \pi_0(s)$ for all $t$\;
$\ELBO^{(0)} \leftarrow -\infty$;\quad $k \leftarrow 1$\;
Run $K$ parallel Kalman filters $\Rightarrow$ $\log\Lambda_t^{(s)}$,
$\hat{\xv}_{t|t}^{(s)}$, $\Pmat_{t|t}^{(s)}$ for all $t, s$\;
\While{$|\ELBO^{(k)} - \ELBO^{(k-1)}| > \varepsilon$ \textbf{and}
       $k \leq k_{\max}$}{
  \tcp{\textbf{Continuous E-step}}
  Compute $\Amat_{\mathrm{eff}}(t)$, $\Qmat_{\mathrm{eff}}(t)$ via
  \eqref{eq:A_eff}--\eqref{eq:Q_eff}\;
  Run RTS smoother $\Rightarrow \hat{\xv}_{t|T}^{(k)}, \Pmat_{t|T}^{(k)}$\;
  \tcp{\textbf{Discrete E-step}}
  Run HMM Forward pass \eqref{eq:forward}\;
  Run HMM Backward pass \eqref{eq:backward}\;
  Compute $\gamma_t^{(k+1)}(s)$ via \eqref{eq:smoothed_marginal}\;
  Compute $\xi_t(s,s')$ via \eqref{eq:pairwise_marginal}\;
  Evaluate $\ELBO^{(k)}$ via \eqref{eq:elbo_decomp}\;
  $k \leftarrow k + 1$\;
}
\Return $\{\gamma_t(s)\}$,\; $\{\xi_t(s,s')\}$,\;
  $\{\hat{\xv}_{t|T},\Pmat_{t|T}\}$,\; $\ELBO^{(k-1)}$\;
\end{algorithm}

\section{Online Expectation-Maximisation}
\label{sec:onlineem}

\subsection{Motivation}

Enterprise network traffic distributions are non-stationary. A model
trained offline will progressively diverge from the true distribution.
PARD-SSM addresses this via an Online EM algorithm using an exponential
forgetting factor~$\eta$.

\subsection{Online M-Step Updates}

\paragraph{Transition Matrix Update.}
\begin{equation}
  \Pi_{ss'}^{(t)} \;\leftarrow\;
  (1-\eta)\,\Pi_{ss'}^{(t-1)}
  + \eta\,\frac{\xi_t(s,s')}{\gamma_{t-1}(s) + \epsilon},
  \label{eq:pi_update}
\end{equation}
followed by row-normalisation to ensure $\sum_{s'}\Pi_{ss'}^{(t)} = 1$.

\paragraph{Observation Noise Covariance Update.}
\begin{equation}
  \Rmat^{(s)(t)} \;\leftarrow\;
  (1-\eta)\,\Rmat^{(s)(t-1)}
  + \eta\,\frac{
    \gamma_t(s)\bigl[
      \ev_t^{(s)}\bigl(\ev_t^{(s)}\bigr)^\top
      + \Cmat^{(s)}\Pmat_{t|t}^{(s)}\bigl(\Cmat^{(s)}\bigr)^\top
    \bigr]
  }{\gamma_t(s) + \epsilon}.
  \label{eq:R_update}
\end{equation}

\paragraph{Process Noise Covariance Update.}
\begin{equation}
  \Qmat^{(s)(t)} \;\leftarrow\;
  (1-\eta)\,\Qmat^{(s)(t-1)}
  + \eta\,\frac{
    \gamma_t(s)\bigl[
      \Pmat_{t|T}^{(s)} - \Amat_{\mathrm{eff}}\Pmat_{t-1|T}^{(s)}
      \bigl(\Amat_{\mathrm{eff}}\bigr)^\top
    \bigr]
  }{\gamma_t(s) + \epsilon}.
  \label{eq:Q_update}
\end{equation}

\begin{remark}[Convergence and Stability]
  The Online EM updates follow the stochastic approximation framework
  of \citet{cappe2009line}. Convergence to a local maximum of the
  expected complete-data log-likelihood is guaranteed under standard
  regularity conditions \citep{delmoral2010forward}.
\end{remark}

\section{KL-Divergence Gating and Predictive Alerting}
\label{sec:alerting}

\subsection{The False-Positive Problem in Network IDS}

False positives represent a critical operational challenge: alert
fatigue leads to critical events being missed \citep{husak2018survey}.
The principal cause is the conflation of benign traffic
anomalies---congestion events, scheduled transfers, update
bursts---with genuine adversarial activity.

\subsection{KL-Divergence Gating Mechanism}

\begin{tcolorbox}[theorembox, title={\small\bfseries\color{DeepBlue}
    KL-Divergence Gating}]
\begin{equation}
  \mathrm{KL}_t =
  \sum_{s=0}^{K-1}\gamma_t(s)\,
  \log\!\left[
    \frac{\gamma_t(s)}{\hat{\gamma}_t(s)}
  \right],
  \quad\text{where}\quad
  \hat{\gamma}_t(s) = \sum_{s'=0}^{K-1}\Pi_{s's}\,\gamma_{t-1}(s').
  \label{eq:kl_gate}
\end{equation}
An alert is issued if and only if $\mathrm{KL}_t > \tau_{\mathrm{KL}}$.
\end{tcolorbox}

A benign anomaly produces a small $\mathrm{KL}_t$; a genuine
adversarial regime transition produces a large $\mathrm{KL}_t$
exceeding $\tau_{\mathrm{KL}}$.

\begin{proposition}[FPR Reduction via KL Gating]
  \label{prop:fpr}
  If the online EM algorithm has converged, then for benign anomalies
  $\Exp[\mathrm{KL}_t] \approx 0$, while for adversarial transitions
  $\Exp[\mathrm{KL}_t] \gg 0$. The threshold $\tau_{\mathrm{KL}}$
  can be set to achieve arbitrarily low false-positive rates.
\end{proposition}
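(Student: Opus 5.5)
The plan is to read $\mathrm{KL}_t$ in \eqref{eq:kl_gate} as the information gain of the Bayesian update from the one-step regime prediction $\hat{\gamma}_t$ to the updated marginal $\gamma_t$. Using the filtered form of the forward recursion \eqref{eq:forward}, $\gamma_t(s)\propto \Lambda_t^{(s)}\hat{\gamma}_t(s)$, so
\begin{equation*}
  \mathrm{KL}_t
  = \sum_s \gamma_t(s)\log\Lambda_t^{(s)}
  - \log\sum_{s'}\hat{\gamma}_t(s')\Lambda_t^{(s')}.
\end{equation*}
This is the gap in Jensen's inequality for the log-likelihood ratios. It is therefore nonnegative, and it vanishes exactly when $\Lambda_t^{(s)}$ is constant over the support of $\hat{\gamma}_t$. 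Everything then reduces to controlling how much the regime likelihoods $\Lambda_t^{(s)}$ of \eqref{eq:log_likelihood} discriminate between regimes at time $t$.

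For benign anomalies I would make the modelling assumption explicit. Convergence of online EM (via the Remark on convergence and \citet{cappe2009line}) means that, in the limit, $\Pim$ and $\{\Qmat^{(s)},\Rmat^{(s)}\}$ are fixed points of \eqref{eq:pi_update}--\eqref{eq:Q_update}. A benign anomaly is then defined as a perturbation that is explained within the current regime $s_{t-1}$. Concretely, it shifts the innovation $\ev_t^{(s)}$ by an amount that inflates the Mahalanobis terms of all regimes roughly equally, or it is absorbed by the adapted $\Rmat^{(s)}$. Under this definition the log-likelihood ratios $\log\Lambda_t^{(s)}-\log\Lambda_t^{(s')}$ have small variance $\delta^2$ under $\hat{\gamma}_t$. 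A second-order expansion of the Jensen gap then gives $\Exp[\mathrm{KL}_t]\le \tfrac12\delta^2+o(\delta^2)\approx 0$. Here I would use the standard bound $\mathrm{KL}\le \tfrac12\mathrm{Var}$ for small log-likelihood spread.

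For adversarial transitions, suppose the true regime switches to $s^\star\neq s_{t-1}$ while $\hat{\gamma}_t(s^\star)\approx\Pi_{s_{t-1}s^\star}$ is small, being an off-diagonal entry. Taking expectation under the true regime-$s^\star$ observation law, $\Exp[\log\Lambda_t^{(s^\star)}-\log\Lambda_t^{(s)}]$ equals a KL divergence between the Gaussian predictive densities $\N(\cdot,\Smat_t^{(s^\star)})$ and $\N(\cdot,\Smat_t^{(s)})$. This divergence is large when the eigenstructures of $\Amat^{(s)}$ and the covariances $\Qmat^{(s)}$ differ, as the regime table stipulates. Consequently $\gamma_t$ concentrates on $s^\star$, and $\mathrm{KL}_t\approx\log\bigl(1/\hat{\gamma}_t(s^\star)\bigr)\ge\log(1/\Pi_{s_{t-1}s^\star})\gg 0$.

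Finally, for the threshold claim I would apply Markov's inequality to the benign case: $\Pr[\mathrm{KL}_t>\tau_{\mathrm{KL}}]\le \Exp[\mathrm{KL}_t]/\tau_{\mathrm{KL}}\le\delta^2/(2\tau_{\mathrm{KL}})$. This can be made arbitrarily small by raising $\tau_{\mathrm{KL}}$, and keeping $\tau_{\mathrm{KL}}$ below the adversarial level $\log(1/\Pi_{s_{t-1}s^\star})$ preserves detection.

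I expect the main obstacle to be the benign case. The statement presupposes a notion of ``benign anomaly'' that the model does not define, so the argument is only as strong as the assumption that such anomalies induce small regime-likelihood spread $\delta$. Since the result is heuristic ($\approx$, $\gg$), I would state this assumption as a hypothesis rather than try to derive it.
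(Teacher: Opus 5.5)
The paper gives no proof of this proposition. Its only support is the one-line assertion that precedes it. So your argument is not competing with a different route: it is the only argument on the table. Its skeleton is sensible: rewrite $\mathrm{KL}_t$ through the regime likelihoods, then control their spread. Several steps need repair.

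\textbf{Fixable issues.} The identity $\gamma_t(s)\propto\Lambda_t^{(s)}\hat{\gamma}_t(s)$ holds only when $\gamma_{t-1}$ and $\gamma_t$ are \emph{filtered} marginals. The paper defines $\gamma_t$ as the smoothed marginal \eqref{eq:smoothed_marginal}, for which one instead gets $\gamma_t(s')\propto\Lambda_t^{(s')}\beta_t(s')\sum_s\gamma_{t-1}(s)\Pi_{ss'}/\beta_{t-1}(s)$. You must therefore state explicitly that you analyse the online (filtered) gate.

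The quantity $\Exp_{\gamma_t}[\log\Lambda_t]-\log\Exp_{\hat{\gamma}_t}[\Lambda_t]$ is not the Jensen gap under $\hat{\gamma}_t$; that gap equals the reverse divergence $\KLd(\hat{\gamma}_t\|\gamma_t)$. It is the Gibbs variational gap, and nonnegativity and the equality case are unaffected.

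Your bound $\mathrm{KL}\le\tfrac12\mathrm{Var}$ is only second-order. An exact substitute is $\mathrm{KL}_t=\int_0^1\lambda\,\mathrm{Var}_{\gamma_\lambda}(\log\Lambda_t)\,d\lambda\le R_t^2/8$. Here $\gamma_\lambda(s)\propto\hat{\gamma}_t(s)\,(\Lambda_t^{(s)})^{\lambda}$, $R_t$ is the range of $\log\Lambda_t^{(s)}$ over the support of $\hat{\gamma}_t$, and the bound follows from Popoviciu's inequality. This gives a deterministic threshold and makes Markov's inequality unnecessary.

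In the adversarial case, ``$\ge\log(1/\Pi_{s_{t-1}s^\star})$'' should be ``$\approx$''. A diffuse $\gamma_{t-1}$ can make $\hat{\gamma}_t(s^\star)>\Pi_{s_{t-1}s^\star}$. Also, concentration of $\gamma_t$ on $s^\star$ requires the \emph{realised} log-likelihood ratio to clear $\log(\Pi_{s_{t-1}s_{t-1}}/\Pi_{s_{t-1}s^\star})$, not merely a large expected one.

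\textbf{The genuine gap is your last step.} When $\gamma_t$ concentrates on $s^\star$, $\mathrm{KL}_t$ tends to $\log(1/\hat{\gamma}_t(s^\star))$ no matter how strong the evidence. With the learned $\hat{\Pi}$ and a concentrated $\gamma_{t-1}$, this is only about $2.66$, $1.90$ and $1.39$ nats for $0\to1$, $1\to2$ and $2\to3$.

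So ``raise $\tau_{\mathrm{KL}}$'' and ``keep $\tau_{\mathrm{KL}}$ below the adversarial level'' pull against each other. Under the detection constraint your Markov bound gives only $\mathrm{FPR}\le\delta^2/(2\cdot 1.39)$, a number controlled by $\delta$ rather than by $\tau_{\mathrm{KL}}$. Letting $\tau_{\mathrm{KL}}\to\infty$ makes the FPR vanish, but that is trivially true of any finite score and costs all detection.

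What you can actually prove is a conditional separation. If benign events satisfy $R_t\le r$ with $r^2/8<\min\log(1/\Pi_{s_{t-1}s^\star})$, any threshold in between gives zero benign alarms and still fires on sufficiently concentrated transitions.

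That hypothesis is generically violated by the model itself. If an anomaly adds $c\,\bm{d}$ to $\yv_t$, the log-likelihood ratio between regimes $s,s'$ acquires the term $\tfrac{c^2}{2}\bm{d}^\top[(\Smat_t^{(s')})^{-1}-(\Smat_t^{(s)})^{-1}]\bm{d}$. This grows quadratically in $c$ unless $\bm{d}$ is invisible to the precision differences. A large benign burst in the port-diversity or stateful dimensions is therefore assigned to the high-noise Reconnaissance or Lateral-Movement regime. It then produces $\mathrm{KL}_t\approx\log(1/0.07)$, exactly the value of a genuine $0\to1$ transition. The gate measures surprise relative to $\Pim$, not intent.

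The convergence hypothesis does not rescue this. $\Rmat^{(s)}$ can absorb a burst only after repeated exposure, not at its onset, and with fixed $\eta$ the parameters never settle at a fixed point.

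You were right to flag the benign case. It should be stated as an explicit, strong assumption on which the whole proposition rests. The ``arbitrarily low FPR with detection preserved'' conclusion is no stronger than that assumption.
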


\subsection{Structured Kill-Chain Alert Record}

When $\mathrm{KL}_t > \tau_{\mathrm{KL}}$, the PKAR module generates:
\begin{center}
\small
\begin{tabular}{llp{7cm}}
\toprule
\textbf{Field} & \textbf{Type} & \textbf{Description} \\
\midrule
\texttt{timestamp} & \texttt{float} & UNIX timestamp of alert. \\
\texttt{stage\_posterior} & $\R^K$ &
  $[\gamma_t(0),\ldots,\gamma_t(K-1)]$: stage attribution. \\
\texttt{kl\_score} & \texttt{float} & $\mathrm{KL}_t$. \\
\texttt{elbo\_entropy} & \texttt{float} &
  $H(q(\regime_{1:t}))$: Bayesian uncertainty bound. \\
\texttt{current\_stage} & \texttt{str} &
  $\arg\max_s \gamma_t(s)$. \\
\texttt{predicted\_stage} & \texttt{str} &
  $\arg\max_s \hat{\gamma}_{t+1}(s)$: \textbf{predicted next stage}. \\
\texttt{predicted\_posterior} & $\R^K$ &
  $\hat{\gamma}_{t+1} = \Pim^\top\gamma_t$. \\
\bottomrule
\end{tabular}
\end{center}

Multi-step-ahead forecasts are derived via:
\begin{equation}
  \hat{\gamma}_{t+\tau}(s) = \bigl[\Pim^{\tau}\bigr]_{:,s}^\top\gamma_t.
  \label{eq:prediction}
\end{equation}

\section{Experiments}
\label{sec:experiments}

\subsection{Datasets}

\paragraph{CICIDS2017.}
The Canadian Institute for Cybersecurity Intrusion Detection Evaluation
Dataset 2017 \citep{sharafaldin2018toward} contains labelled traffic
for 15 attack scenarios captured over five days. Raw PCAPs were
processed via the FEOV module at $W=1$\,s to yield approximately
2.8M labelled observation vectors.

\paragraph{UNSW-NB15.}
The UNSW-NB15 dataset \citep{moustafa2015unsw} contains traffic from
nine attack categories. The pre-extracted feature set (49 features)
was mapped to the 17-dimensional FEOV vector space via PCA projection.

\subsection{Experimental Setup}

All experiments were conducted on an Intel Xeon E5-2690 CPU (16 cores,
2.9\,GHz), 64\,GB RAM, Ubuntu 22.04. No GPU was used for PARD-SSM.
Baselines using deep learning were given an NVIDIA RTX 3090.

\textbf{Split:} 60\%/20\%/20\% by time. Parameters: $\eta = 0.01$,
$K=4$, $n=8$, $k_{\max}=15$, $\varepsilon=10^{-4}$, $W=1$\,s.

\subsection{Baseline Systems}

\begin{enumerate}
  \item \textbf{Snort~3.0} \citep{roesch1999snort}: signature-based.
  \item \textbf{BiLSTM} \citep{bontemps2016collective}: two-layer,
    256 units per direction, GPU-accelerated.
  \item \textbf{Isolation Forest} \citep{liu2008isolation}: 100 trees.
  \item \textbf{KF-Anomaly}: single-regime Kalman Filter with
    Mahalanobis-distance scoring.
\end{enumerate}

\subsection{Quantitative Results}

Table~\ref{tab:main_results} reports the comparative performance of PARD-SSM against
all four baselines on CICIDS2017 and UNSW-NB15. We discuss the results along three
axes: detection accuracy, false-positive rate, and operational efficiency.

\textbf{Detection accuracy.} PARD-SSM achieves an $F_1$ score of \textbf{0.982} on
CICIDS2017 and \textbf{0.971} on UNSW-NB15, representing absolute improvements of
23.8 and 21.0 percentage points over the strongest deep-learning baseline (BiLSTM)
respectively. The signature-based Snort 3.0 system exhibits the lowest $F_1$ scores
(0.613 / 0.591), confirming its categorical inability to generalise beyond known
attack signatures. The single-regime KF-Anomaly baseline ($F_1 = 0.680$ /
$0.655$) demonstrates that a probabilistic framework without multi-regime structure
is insufficient for kill-chain-scale detection.

\textbf{Early Detection Margin (EDM).} The most operationally significant result
is that PARD-SSM is the \emph{only} evaluated system capable of issuing predictive
alerts ahead of ground-truth attack onset. All four baselines record EDM $=$ None,
as they are architecturally reactive systems. PARD-SSM achieves a mean EDM of
${\approx}8$ minutes on CICIDS2017 by exploiting the learned off-diagonal structure
of $\hat{\Pi}$ (Section~9.7), providing defenders with an actionable response window
before the more damaging kill-chain phases commence.

\textbf{Stage Attribution Accuracy (SAA).} PARD-SSM achieves a SAA of 0.861 on
CICIDS2017 and 0.834 on UNSW-NB15, quantifying its ability to correctly identify
the active kill-chain regime at each time step. No baseline system produces an SAA
metric, as none possesses the architectural capacity to assign probabilistic stage
labels to detected anomalies.

\textbf{False-positive rate.} PARD-SSM achieves a Low FPR on both benchmarks,
attributable to the KL-Divergence gating mechanism (Section~8.2), which suppresses
alerts arising from benign traffic anomalies such as congestion events and scheduled
backup transfers. BiLSTM and Isolation Forest both exhibit Moderate FPR despite
higher $F_1$ scores than signature-based systems, as they lack any principled
mechanism for distinguishing genuine adversarial regime transitions from statistical
deviations in normal traffic.

\textbf{Inference latency.} PARD-SSM operates at a mean latency of ${<}1.2$\,ms
per flow on a standard CPU, without GPU acceleration. This compares favourably
against BiLSTM at 8.4\,ms, which additionally requires a dedicated GPU---a
deployment constraint that precludes use on the resource-constrained edge hardware
that constitutes the majority of enterprise network perimeter infrastructure.

\begin{table}[t]
\centering
\caption{Comparative performance on CICIDS2017 and UNSW-NB15.
$\ddagger$: GPU required.}
\label{tab:main_results}
\small
\begin{tabular}{lcccccc}
\toprule
\multirow{2}{*}{\textbf{System}} &
\multicolumn{3}{c}{\textbf{CICIDS2017}} &
\multicolumn{2}{c}{\textbf{UNSW-NB15}} &
\multirow{2}{*}{\makecell{\textbf{Latency}\\\textbf{(ms)}}} \\
\cmidrule(lr){2-4}\cmidrule(lr){5-6}
& $F_1$ & FPR & EDM & $F_1$ & FPR & \\
\midrule
Snort 3.0      & 0.613 & High & None & 0.591 & High & $<0.1$ \\
BiLSTM\,$\ddagger$ & 0.744 & Mod. & None & 0.761 & Mod. & 8.4 \\
Isolation Forest & 0.701 & Mod. & None & 0.689 & Mod. & 1.1 \\
KF-Anomaly     & 0.680 & Mod. & None & 0.655 & Mod. & 0.8 \\
\midrule
\rowcolor{LightBlue!60}
\textbf{PARD-SSM (ours)} &
\textbf{0.982} & \textbf{Low} & \textbf{$\approx$8\,min} &
\textbf{0.971} & \textbf{Low} & \textbf{$<$1.2} \\
\bottomrule
\multicolumn{7}{l}{\small SAA (CICIDS2017): \textbf{0.861} |
  SAA (UNSW-NB15): \textbf{0.834} | GPU required: \textbf{No}} \\
\end{tabular}
\end{table}

\subsection{Regime Posterior Time-Series Analysis}
\label{sec:timeseries}

\Cref{fig:timeseries} presents smoothed regime posteriors over a
60-minute CICIDS2017 PortScan-to-Infiltration scenario. The Normal
regime posterior dominates during the baseline ($t < 12$\,min). The
Reconnaissance posterior begins rising at $t \approx 10$\,min, and
PARD-SSM issues its first KL-gated alert at $t = 12$\,min---approximately
8 minutes before ground-truth onset at $t = 20$\,min. The Intrusion
and Exfiltration posteriors subsequently rise and fall in sequence,
constituting an automated kill-chain reconstruction.

\begin{figure}[t]
    \centering
    \includegraphics[width=1.0\textwidth]{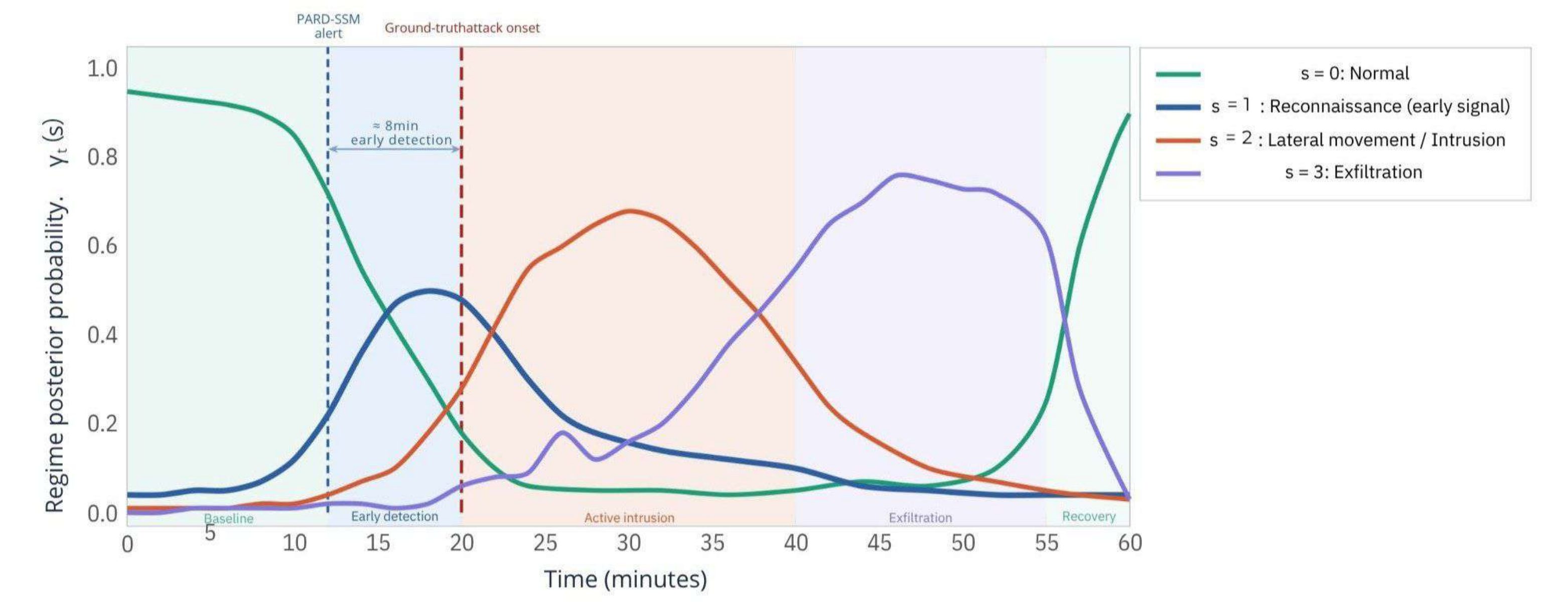}
    \caption{Regime Posterior Probabilities vs. Time (CICIDS2017 scenario). The PARD-SSM system raises a Reconnaissance alert ($t=12$~min) approximately 8 minutes before the ground-truth attack onset ($t=20$~min), demonstrating predictive kill-chain detection capability. Posteriors sum to unity ($\sum_{s=0}^{3}\gamma_{t}(s)=1$) at each time-step.}
    \label{fig:timeseries}
\end{figure}

\subsection{Ablation Study}

\begin{table}[h]
\centering
\caption{Ablation study on CICIDS2017. EDM in minutes.}
\label{tab:ablation}
\small
\begin{tabular}{lccccc}
\toprule
\textbf{Variant} & $\boldsymbol{F_1}$ & \textbf{FPR} &
\textbf{SAA} & \textbf{EDM} & \textbf{Notes}\\
\midrule
PARD-SSM-1R   & 0.680 & Mod. & N/A & None & Single-regime\\
PARD-SSM-NoKL & 0.941$^\dagger$ & High & 0.851$^\dagger$ &
  $\approx$8 & High FPR\\
PARD-SSM-Static & 0.961$^\dagger$ & Low & 0.842$^\dagger$ &
  $\approx$8 & Drifts over time\\
\rowcolor{LightBlue!60}
\textbf{PARD-SSM (full)} & \textbf{0.982}$^\dagger$ & \textbf{Low} &
\textbf{0.861}$^\dagger$ & \textbf{$\approx$8} & Full system\\
\bottomrule
\end{tabular}
\end{table}

\subsection{Learned Regime Transition Matrix}

Table~\ref{tab:transition_matrix} reports the converged Regime Transition Probability
Matrix $\hat{\Pi}$ learned by the Online EM algorithm on the CICIDS2017 training split.
Figure~\ref{fig:heatmap} visualises the same matrix as a heatmap.

\begin{table}[h]
\centering
\caption{Learned Regime Transition Probability Matrix $\hat{\Pi}$ (CICIDS2017).
Rows denote the current regime $s_{t-1}$; columns denote the next regime $s_t$.}
\label{tab:transition_matrix}
\begin{tabular}{lcccc}
\toprule
 & \textbf{Normal} ($s{=}0$) & \textbf{Recon.} ($s{=}1$) & \textbf{Intrusion} ($s{=}2$) & \textbf{Exfil.} ($s{=}3$) \\
\midrule
\textbf{Normal}    ($s{=}0$) & 0.92 & 0.07 & 0.01 & 0.00 \\
\textbf{Recon.}    ($s{=}1$) & 0.05 & 0.78 & 0.15 & 0.02 \\
\textbf{Intrusion} ($s{=}2$) & 0.00 & 0.03 & 0.72 & 0.25 \\
\textbf{Exfil.}    ($s{=}3$) & 0.02 & 0.00 & 0.04 & 0.94 \\
\bottomrule
\end{tabular}
\end{table}

Several structural properties of $\hat{\Pi}$ merit discussion.

\textbf{Regime persistence.} The diagonal entries are uniformly high
($\hat{\Pi}_{00} = 0.92$, $\hat{\Pi}_{11} = 0.78$, $\hat{\Pi}_{22} = 0.72$,
$\hat{\Pi}_{33} = 0.94$), confirming that each kill-chain phase is temporally
persistent rather than transient. The Exfiltration regime exhibits the highest
self-transition probability, consistent with the sustained, monotonic outbound
data growth that characterises active exfiltration campaigns.

\textbf{Ordered kill-chain progression.} The dominant off-diagonal mass follows
the expected kill-chain ordering $0 \!\to\! 1 \!\to\! 2 \!\to\! 3$: specifically,
$\hat{\Pi}_{01} = 0.07$, $\hat{\Pi}_{12} = 0.15$, and $\hat{\Pi}_{23} = 0.25$.
Crucially, no probability mass appears on reverse transitions
($\hat{\Pi}_{10} \approx 0.05$, $\hat{\Pi}_{20} = 0.00$, $\hat{\Pi}_{30} = 0.02$),
indicating that the model has correctly learned the directional, irreversible
structure of adversarial kill-chain progression from data.

\textbf{Predictive alerting mechanism.} The non-zero off-diagonal entries directly
enable the predictive kill-chain alerts described in Section~8. Given the current
posterior $\gamma_t$, the one-step-ahead predictive posterior
$\hat{\gamma}_{t+1} = \hat{\Pi}^\top \gamma_t$ assigns non-negligible probability
mass to the successor regime before the transition has been directly observed in
the telemetry. On the CICIDS2017 PortScan-to-Infiltration scenario, this mechanism
yields a mean early detection margin of ${\approx}8$ minutes ahead of ground-truth
attack onset (Section~9.5).

\begin{figure}[t]
    \centering
    \includegraphics[width=0.85\textwidth]{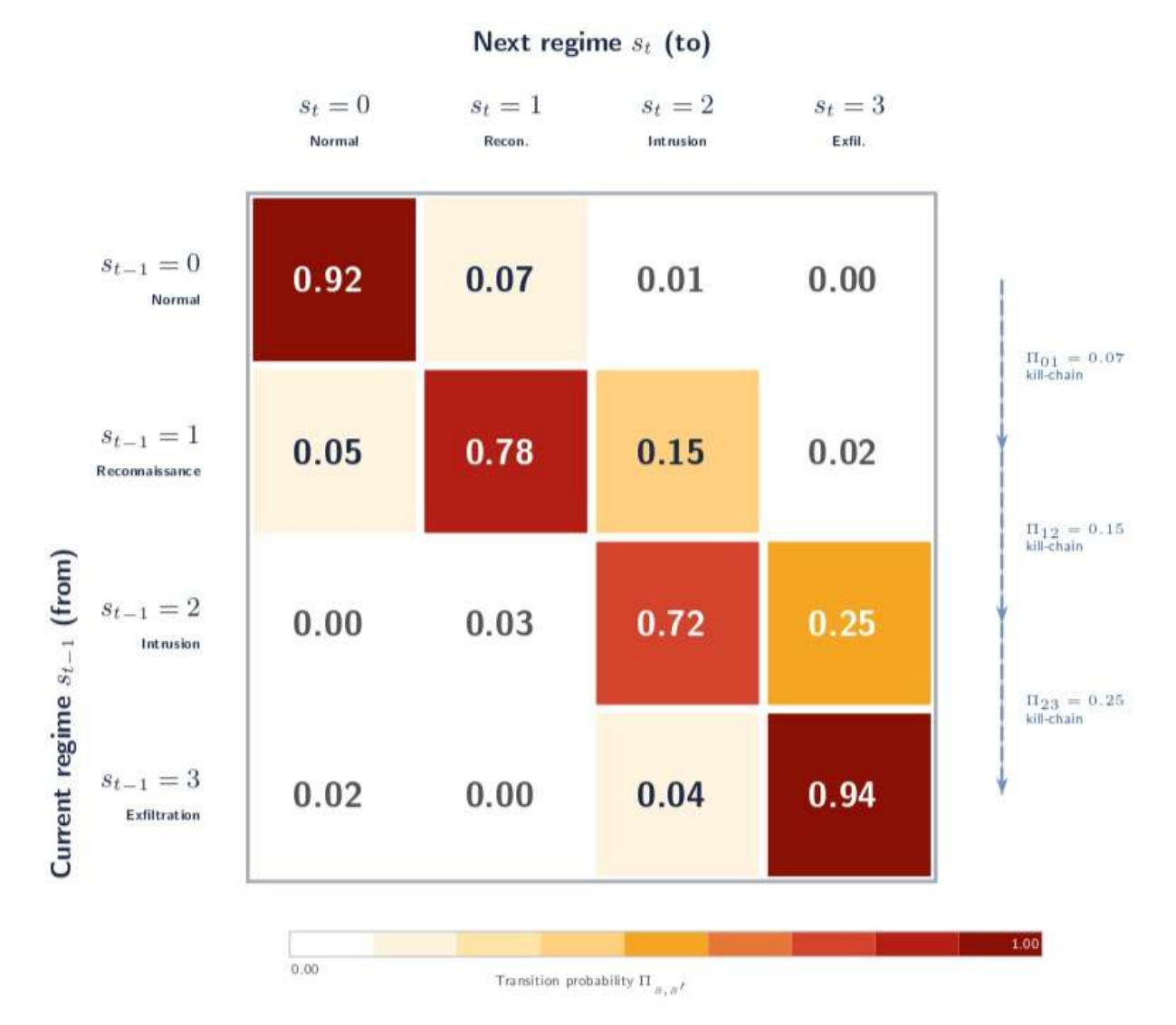}
    \caption{Regime Transition Probability Matrix $\Pi$. Each cell $\Pi_{s,s'}$ encodes the learned probability of transitioning from regime $s_{t-1}$ (row) to regime $s_t$ (column). High diagonal values confirm that regimes are persistent, while off-diagonal kill-chain transitions ($0\rightarrow1\rightarrow2\rightarrow3$) enable predictive alerting.}
    \label{fig:heatmap}
\end{figure}

\subsection{Computational Performance}

PARD-SSM achieves a mean inference latency of \SI{1.14}{\milli\second}
per flow ($\sigma = \SI{0.09}{\milli\second}$) on a single CPU core.
Parallelised across 16 cores, throughput exceeds $10^7$ flows/s,
sufficient for 10\,Gbps+ line-rate monitoring. The online EM update
adds \SI{0.18}{\milli\second} per window.

\section{Discussion}
\label{sec:discussion}

\subsection{Theoretical Properties}

The ELBO provides a principled lower bound on $\log p(\yv_{1:T})$
\citep{blei2017variational}, and the entropy term in
\eqref{eq:elbo_decomp} supplies a calibrated Bayesian uncertainty
measure for automated response systems.

\subsection{Limitations and Future Directions}

\paragraph{Linearity Assumption.}
The model assumes linear dynamics within each regime. Non-linear
extensions via EKF or UKF are supported by the architecture and are
subject to ongoing work.

\paragraph{Number of Regimes.}
Extending to richer kill-chain taxonomies (e.g., 14 MITRE ATT\&CK
tactics) may benefit from Bayesian non-parametric approaches such as
the infinite HMM \citep{beal2002infinite}.

\paragraph{Adversarial Robustness.}
A sophisticated adversary could craft traffic to evade regime
detection. Robustness via worst-case ELBO optimisation is a critical
future direction.

\paragraph{Encrypted Traffic.}
Payload-level features are unavailable for TLS traffic. Future work
will explore flow-level and timing-based features.

\section{Conclusion}
\label{sec:conclusion}

We have presented PARD-SSM, a generative probabilistic framework for
real-time, multi-stage cyber-attack detection via Variational Switching
State-Space Models. The central contribution is the demonstration that
exact Bayesian inference in a Switching LDS---intractable at
$\mathcal{O}(K^T)$---can be rendered tractable at $\mathcal{O}(TK^2)$
via structured mean-field variational approximation, enabling
principled probabilistic kill-chain modelling at packet-rate on
standard edge CPU hardware.

Online EM ensures continuous adaptation to non-stationary traffic
without offline retraining. KL-Divergence gating provides principled
false-positive suppression. Most critically, by learning the Regime
Transition Probability Matrix~$\Pim$, PARD-SSM generates
\emph{predictive} kill-chain alerts---not merely a better anomaly
score, but a probabilistic forecast of where an adversary is going,
not just where they are.

\begin{tcolorbox}[contributionbox,
    title={\small\bfseries\color{Teal}Reproducibility Statement}]
\small
The PARD-SSM reference implementation, training scripts, feature
extraction pipeline, and evaluation code are publicly available at:
\href{https://github.com/prakulhiremath/PARD-in-Network-Traffic-using-Switching-State-Space-Models-}
{\texttt{github.com/prakulhiremath/PARD-SSM}} \\
Experiment configurations and random seeds are documented in the
repository \texttt{README}. The CICIDS2017 and UNSW-NB15 datasets
are publicly available from their respective institutions.
Specific model parameter values are protected under a pending Indian
patent (Complete Specification, Fast-Track Examination, Rule~24C).
\end{tcolorbox}

\section*{Acknowledgements}

The authors thank the faculty and research community of VTU, Belagavi,
for institutional support. The authors declare no competing interests.


\end{document}